\pgfplotsset{compat = newest}
\DeclareMathOperator*{\argmax}{arg\,max}
\DeclareMathOperator*{\E}{\mathbb{E}}
\renewcommand{\d}{\,\mathrm{d}}
\newtheorem{theorem}{Theorem}[section]
\newtheorem{lemma}{Lemma}[section]
\theoremstyle{definition}
\newtheorem{remark}{Remark}
\newtheorem{proposition}{Proposition}[section]
\newtheorem{definition}{Definition}[section]
\newtheorem{problem}{Problem}[section]
\title{Improved Approximation to First-Best Gains-from-Trade}
\author{
	Yumou Fei\thanks{Peking University. Email: feiym2002@stu.pku.edu.cn}
}
\date{}
\begin{document}
	
\maketitle

\begin{abstract}
We study the two-agent single-item bilateral trade. Ideally, the trade should happen whenever the buyer's value for the item exceeds the seller's cost. However, the classical result of Myerson and Satterthwaite showed that no mechanism can achieve this without violating one of the Bayesian incentive compatibility, individual rationality and weakly balanced budget conditions. This motivates the study of approximating the trade-whenever-socially-beneficial mechanism, in terms of the expected gains-from-trade. Recently, Deng, Mao, Sivan, and Wang showed that the random-offerer mechanism achieves at least a 1/8.23 approximation. We improve this lower bound to 1/3.15 in this paper. We also determine the exact worst-case approximation ratio of the seller-pricing mechanism assuming the distribution of the buyer's value satisfies the monotone hazard rate property.
\end{abstract}

\section{Introduction}

Two-sided markets, with strategic players on both the sell-side and the buy-side, has been an important research topic in economics. This paper considers the simplest model of such market, the two-agent single-item bilateral trade.

\subsection{Model}
Suppose there is a single seller and a single buyer on the market. There is an item held by the seller and to be sold to the buyer. The buyer's private value for the item is a random variable $v$ with cumulative distribution function (CDF) $F$, and the seller's private cost of selling it out is a random variable $c$ with CDF $G$. We assume that $v$ and $c$ are independent, and that $F$ and $G$ have finite first moments. We need to design a mechanism which, on input $v$ and $c$, decides whether the transaction should happen. Let $x(\cdot,\cdot)$ be a (Borel-measurable) function from $\mathbb{R}^{2}$ to $[0,1]$, which denotes the trading probability decided by the mechanism, or the ``allocation rule''. The \textit{gains-from-trade} ($\mathsf{GFT}$), or the expected social utility gain from trading, is defined as 
$$\mathsf{GFT}:=\underset{\substack{v\sim F\\ c\sim G}}{\mathbb{E}}\left[(v-c)\cdot x(v,c)\right].$$
Thinking of $F$ and $G$ as given, what choice of the function $x$ maximizes $\mathsf{GFT}$? It is clear that whatever $F$ and $G$ are, the quantity $\mathsf{GFT}$ is always maximized at $x^{*}(v,c):=\mathbbm{1}\{v\geq c\}$. However, in reality, we have to take into consideration the strategic behavior of both sides of the market. As in many mechanism design problems, in order to make an allocation rule Bayesian incentive compatible (BIC) and individually rational (IR), we need to include a ``money transfer'' rule. But unlike in the case of one-sided auctions, the ``payment'' from the sell-side can be negative in two-sided markets. Thus, in addition to the commonly studied BIC and IR conditions, the ``balance of budget'' is another important consideration in the design of bilateral trade mechanisms. In particular, we are concerned with the following requirement: 
\begin{definition}
If the payment from the buyer is always at least the revenue of the seller, we say the mechanism is weakly budget balanced (WBB).  
\end{definition}

\cite{MS:1983} show that for very general $F$ and $G$, the ideal allocation rule $x^{*}(v,c)=\mathbbm{1}\{v\geq c\}$ cannot be made into a BIC, IR and WBB mechanism. But on the other hand, this idealism does offer a benchmark which we can try to approximate using mechanisms with these properties. We call the gains-from-trade achieved by $x^{*}(v,c)$ the \textit{first-best} $\mathsf{GFT}$. Namely, define
$$\mathsf{FB}:=\E_{\substack{v\sim F\\ c\sim G}}\left[(v-c)\cdot \mathbbm{1}\{v\geq c\}\right].$$

The maximum $\mathsf{GFT}$ achievable by Bayesian incentive compatible, individually rational and weakly budget balanced mechanisms is denoted $\mathsf{SB}$, or the \textit{second-best}. \cite{MS:1983} also describe a BIC, IR and WBB mechanism that actually achieves the second-best gains-from-trade. However, this second-best mechanism is complicated and difficult to implement in practice. As a consequence, the understanding of the second-best mechanism and the search for simple and practical alternatives have become major research problems. The following are some of the most studied simple mechanisms:

\begin{itemize}
\item The fixed-price mechanism. A fixed price $p$ is set and the transaction happens iff $v\geq p\geq c$. Formally, the allocation rule is $x(v,c)=\mathbbm{1}\{v\geq p\geq c\}$ and the buyer pays the seller the price $p$. The resulting gains-from-trade is 
$$\mathsf{FixedP}=\sup_{p\in\mathbb{R}}\int_{-\infty}^{p}\int_{p}^{+\infty}(v-c)\d F(v)\d G(c).$$ 
The fixed price mechanism has the additional advantage that it's dominant strategy incentive compatible (DSIC).
\item
The seller-pricing mechanism. The seller gets the right to set the price and the buyer can only decide whether or not to buy the item and pay this price. The buyer chooses to buy it iff $v\geq p$. Knowing his private cost $c$, the seller sets the price $p$ that maximizes $(p-c)(1-F(p^{-}))$, where $F(p^{-}):=\lim_{x\rightarrow p^{-}}F(x)$. The resulting gains-from-trade is 
$$\mathsf{SellerP}=\int_{-\infty}^{+\infty}\int_{p_{c}}^{\infty}(v-c)\d F(v)\d G(c),\quad\text{where }p_{c}\in\argmax_{p}(p-c)(1-F(p^{-})).$$
\item The buyer-pricing mechanism. The buyer gets the right to set the price and the seller can only decide whether or not to sell the item and take this price. The seller chooses to sell it iff $c\leq p$. Knowing his private value $v$, the buyer sets the price $p$ that maximizes $(v-p)G(p)$. The resulting gains-from-trade is 
$$\mathsf{BuyerP}=\int_{-\infty}^{+\infty}\int_{-\infty}^{p_{v}}(v-c)\d G(c)\d F(v),\quad\text{where }p_{v}\in\argmax_{p}(v-p)G(p).$$
\item The random-offerer mechanism. Flip a fair coin to decide who gets to set the price. This is a 50:50 mixture of the seller pricing mechanism and the buyer price mechanism, which extracts a gains-from-trade of
$$\mathsf{RandOff}=\frac{1}{2}\mathsf{SellerP}+\frac{1}{2}\mathsf{BuyerP}.$$
\end{itemize}
\begin{remark}\label{rem:symmetry}
It's well-known that the quantities $\mathsf{BuyerP}$ and $\mathsf{SellerP}$ are symmetric to each other. In fact, if the buyer's private value suddenly became the random variable $-c$ and the seller's private cost became the random variable $-v$, then the two quantities would swap with each other. As a consequence, in terms of the gains-from-trade, there is a good symmetry between the seller-pricing mechanism and the buyer-pricing mechanism.
\end{remark}

\subsection{Our Results}\label{subsec:results}
In this paper, we focus on the following problems:
\begin{problem}\label{prob:basic}
Among all problems surrounding the bilateral trade setting, one basic open problem is to determine the worst-case approximation ratio of $\mathsf{SB}$ to $\mathsf{FB}$, i.e.
$$\inf_{F,G}\frac{\mathsf{SB}}{\mathsf{FB}}.$$
\end{problem}
\begin{problem}\label{prob:simple}
Seeing that the second best mechanism is usually too complex to be practical, it's natural to ask the same thing for simple mechanisms: Are there some simple BIC, IR and WBB mechanisms that achieve good approximations to the first best $\mathsf{GFT}$?
\end{problem}

It was not until the recent groundbreaking work of \cite*{DMSW:2022} that the second-best gains-from-trade was shown to provide a constant approximation to the first-best, i.e. $\inf_{F,G}(\mathsf{SB}/\mathsf{FB})>0$. In fact, they attack \Cref{prob:basic} by attacking \Cref{prob:simple} instead: they show that $\mathsf{FB}\leq 8.23\cdot\mathsf{RandOff}$. Since $\mathsf{SB}\geq \mathsf{RandOff}$, this implies that $\mathsf{FB}\leq 8.23\cdot\mathsf{SB}$. In other words, we can write $$\inf_{F,G}\frac{\mathsf{SB}}{\mathsf{FB}}\geq\inf_{F,G}\frac{\mathsf{RandOff}}{\mathsf{FB}}\geq \frac{1}{8.23}\approx 0.121.$$

In the hardness direction, \cite*{LLR:1989} and \cite{BM:2016} exhibit distributions for which the ratio $\mathsf{SB}/\mathsf{FB}$ is $2/e\approx 0.736$. \cite*{BDK:2021} show that $\inf_{F,G}(\mathsf{RandOff}/\mathsf{FB})\leq 0.495$. \cite*{CGMZ:2021} (in their arxiv version) independently give a proof that $\inf_{F,G}(\mathsf{RandOff}/\mathsf{FB})\leq \frac{1}{2}-\varepsilon$ for some constant $\varepsilon>0$.

This leaves a relatively large gap between the lower bound of $0.121$ and the upper bound of $0.736$ for \Cref{prob:basic}. For the approximation ratio of $\mathsf{RandOff}$, the gap is left at $[0.121, 0.495]$. In \Cref{sec:proof1}, we improve the lower bounds of both problems from $1/8.23\approx 0.121$ to $1/3.15\approx 0.317$, as another step towards the ultimate goal of closing the gaps.
\begin{theorem}
For any pair of distributions $F$ and $G$, the inequality $\mathsf{FB}\leq 3.15\cdot\mathsf{RandOff}$ holds.
\end{theorem}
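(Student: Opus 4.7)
The plan is to refine the approach of~\cite{DMSW:2022} by averaging elementary pricing lower bounds against a carefully chosen distribution of reference prices. The starting point is the threshold identity
\[
\mathsf{FB} = \int_{-\infty}^{\infty} G(t)\bigl(1-F(t^-)\bigr)\,\d t,
\]
obtained from $(v-c)\mathbbm{1}\{v\ge c\} = \int \mathbbm{1}\{c\le t\le v\}\,\d t$ together with the independence of $v$ and $c$. The aim is to dominate this integral by a multiple of $\mathsf{RandOff}$ using only the two underlying pricing mechanisms.

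First I would record the basic revenue-based lower bounds. For any reference price $p\in\mathbb{R}$, a seller with cost $c$ can always offer $p$ to secure revenue $(p-c)_+(1-F(p^-))$, and the seller-pricing GFT dominates the seller's revenue; symmetrically for the buyer. Integrating out the unconditioned variable gives
\[
\mathsf{SellerP} \;\ge\; (1-F(p^-))\int_{-\infty}^{p} G(c)\,\d c,\qquad \mathsf{BuyerP} \;\ge\; G(p)\int_{p}^{\infty}\bigl(1-F(v)\bigr)\,\d v.
\]
Since the revenue-optimal deterministic price dominates any randomization of prices, averaging $p$ against an arbitrary probability measure $\mu$ preserves each inequality, and adding the two bounds yields
\[
2\mathsf{RandOff} \;\ge\; \int\Bigl[(1-F(p^-))\int_{-\infty}^{p} G(c)\,\d c + G(p)\int_{p}^{\infty}\bigl(1-F(v)\bigr)\,\d v\Bigr]\d\mu(p).
\]

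The crux of the proof is to design $\mu$ (as a functional of $F$ and $G$) so that this right-hand side is at least $2\mathsf{FB}/\alpha$ with $\alpha\le 3.15$. A natural candidate places $\mu$-mass near the crossover threshold $\tau$ satisfying $G(\tau)\approx 1-F(\tau^-)$ and spreads it outward, so that the $\mathsf{SellerP}$ term covers the $t<\tau$ portion of $\mathsf{FB}$ (where $G$ is small) while the $\mathsf{BuyerP}$ term covers the $t>\tau$ portion (where $1-F$ is small). The improvement over DMSW's $8.23$ should come precisely from using such a spread-out $\mu$ in place of what is essentially a single fixed price: each threshold $t$ at which the first-best integrand is nontrivial then receives a dedicated share of reference-price mass.

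The hard part will be the extremal analysis that pins down the sharp constant. After substituting the candidate $\mu$, the task reduces to verifying an integrated inequality comparing $G(t)(1-F(t^-))$ to running integrals of $G$ and $1-F$ weighted by the $\mu$-density. I expect the worst case to collapse onto a one- or two-parameter family of distributions (for instance, exponential-type tails meeting at $\tau$, against which the $\mu$-density can be computed in closed form); the maximum value of the resulting single-variable ratio then gives the constant $\alpha\le 3.15$. The bookkeeping of matching the correct side of $\tau$ to the correct pricing mechanism, and ensuring that the two contributions combine without double-counting, is where most of the technical care will be required.
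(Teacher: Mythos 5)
Your proposal takes a genuinely different route from the paper's, but it contains a structural flaw that makes the target constant unreachable. The paper's argument lower-bounds $\mathsf{SellerP}$ and $\mathsf{BuyerP}$ by $\mathsf{SProfit}$ and $\mathsf{BProfit}$, where $\mathsf{SProfit}=\int \max_p (p-c)(1-F(p))\,\d G(c)$ lets the seller's reference price \emph{depend on the seller's cost} $c$, and similarly the buyer's reference price in Lemma~\ref{lemma:integration-over-c} is allowed to depend on the buyer's value $v$ (through $\mu^{-1}(v)$). Your two inequalities, by contrast, fix a \emph{single} reference price $p$ shared across all costs (resp.\ all values): $\mathsf{SellerP}\ge (1-F(p^-))\int_{-\infty}^p G$, $\mathsf{BuyerP}\ge G(p)\int_p^\infty(1-F)$. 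Averaging over a measure $\mu$ cannot recover the lost flexibility: the right-hand side of your master inequality is \emph{linear} in $\mu$, so its supremum over $\mu$ is attained at a point mass, and the averaged bound is never stronger than the best single $p$ (in fact the summed integrand equals $\mathsf{FixedP}(p)$, the gains-from-trade of the fixed-price mechanism at price $p$). The role you ascribe to a ``spread-out'' $\mu$ is therefore illusory.

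This is not a fixable loss of constant: your bound fails to give \emph{any} constant ratio. Take $F(v)=1-e^{-v}$ for $v\ge 0$ and $G(c)=e^{c-T}$ for $c\le T$ (a reversed exponential), with $T\to\infty$. One computes $G(t)(1-F(t))=e^{-T}$ for every $t\in[0,T]$, so $\mathsf{FB}=(T+2)e^{-T}$. But $(1-F(p))\int_{-\infty}^p G(c)\,\d c=e^{-T}$ identically on $p\in[0,T]$ (and smaller outside), so the best value your seller bound can certify is $e^{-T}$; symmetrically for the buyer bound. Hence your right-hand side never exceeds $2e^{-T}$, giving only $\mathsf{FB}\le \tfrac{T+2}{2}\cdot 2\mathsf{RandOff}$, a ratio diverging with $T$. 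Meanwhile $\mathsf{SProfit}(c)$ scales like $e^{-1-c}$ for $-1\le c\le T$, and $\mathsf{SProfit}=\Theta(T e^{-T})$, so the paper's approach (which the present theorem in fact certifies) captures the missing factor of $T$. The key ingredient you are missing is some analogue of Lemmas~\ref{lemma:transforming}--\ref{lemma:controlling}: a way to charge the part of $\mathsf{FB}(c)$ below the buyer's quantile threshold to the seller's \emph{cost-dependent} optimal profit $\mathsf{SProfit}(c)$, which is genuinely larger than any bound obtainable from a cost-independent reference price.

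Finally, a minor but telling point: the quantity $\max_p\bigl[(1-F(p^-))\int_{-\infty}^p G + G(p)\int_p^\infty(1-F)\bigr]$ is exactly $\mathsf{FixedP}$, whose approximation ratio to $\mathsf{FB}$ is known to be $0$ (as discussed in the paper's related-work section). Allowing two distinct prices $p_1,p_2$ for the two terms gives only a modest improvement that, as the example above shows, still has no constant lower bound. Any correct proof along these lines must exploit the per-type optimization inside $\mathsf{SProfit}$ and $\mathsf{BProfit}$, not just a posted-price relaxation of them.
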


In contrast to the constant approximation provided by the random-offerer mechanism, most of the other simple mechanisms, including the buyer-pricing mechanism, the seller-pricing mechanism and the fixed-price mechanism, cannot approximate $\mathsf{FB}$ in general. For the seller-pricing mechanism, it's well known that taking $G$ to be a single point mass at 0 and $F$ an equal-revenue-distribution makes the ratio $\mathsf{SellerP}/\mathsf{FB}$ tend to 0. Since $\mathsf{BuyerP}$ is symmetric to $\mathsf{SellerP}$, the buyer-pricing mechanism also has an approximation ratio of 0. The case of the fixed-price mechanism follows from taking $F$ and $G$ to be exponential distributions, as is shown by \cite{BD:2016}. Partly compensating for this hardness of approximation, there is another type of results concerned with the setting where $F$ and $G$ are subject to some restrictions. For example, by \citet{McAfee:2008} and \cite*{KV:2019}, 
$$\inf_{F=G}\frac{\mathsf{FixedP}}{\mathsf{FB}}=\frac{1}{2}.$$
As another example, \citet{BM:2016} consider a ``monotone hazard rate'' condition (see \Cref{def:MHR}) on $F$, and prove that
$$0.368\approx\frac{1}{e}\leq\inf_{\substack{F\in\mathcal{MHR}\\G}}\frac{\mathsf{SellerP}}{\mathsf{FB}}.$$
In \Cref{sec:proof2}, we show that the approximation ratio of $\mathsf{SellerP}$ to $\mathsf{FB}$ assuming monotone hazard rate of the buyer's distribution can be determined exactly (to be $1/(e-1)\approx 0.582$):
\begin{theorem}
If $F\in\mathcal{MHR}$, then $\mathsf{FB}\leq(e-1)\cdot\mathsf{SellerP}$, and the constant $(e-1)$ is optimal.
\end{theorem}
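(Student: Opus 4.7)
The plan is to prove the pointwise inequality
\[
\int_c^\infty (v-c)\,\d F(v) \;\le\; (e-1)\int_{p_c}^\infty (v-c)\,\d F(v)
\]
for every cost $c$, where $p_c\in\argmax_p(p-c)(1-F(p^-))$ is the seller's optimal posted price. Integrating this against $\d G(c)$ then yields $\mathsf{FB}\le (e-1)\cdot\mathsf{SellerP}$ immediately. Writing $R:=1-F$ and integrating by parts, both sides become areas under $R$, and the inequality rearranges to
\[
\int_c^{p_c} R(v)\,\d v\;-\;(p_c-c)R(p_c)\;\le\;(e-2)\cdot T(c),
\]
where $T(c)=(p_c-c)R(p_c)+\int_{p_c}^\infty R(v)\,\d v$. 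Since $T(c)\ge (p_c-c)R(p_c)$, it suffices to establish the strictly stronger bound $\int_c^{p_c}R(v)\,\d v\le (e-1)(p_c-c)R(p_c)$.

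This is where the MHR hypothesis enters. MHR is equivalent to $\log R$ being concave, and the seller's first-order condition $(p_c-c)f(p_c)=R(p_c)$ rewrites as $h(p_c)=1/(p_c-c)$. Concavity of $\log R$ therefore gives the tangent-line upper bound
\[
R(v)\;\le\;R(p_c)\exp\bigl((p_c-v)/(p_c-c)\bigr)
\]
valid for every $v$. Integrating from $c$ to $p_c$ and substituting $u=(p_c-v)/(p_c-c)$ turns the right-hand side into $(p_c-c)R(p_c)\int_0^1 e^u\,\d u=(e-1)(p_c-c)R(p_c)$, exactly as required.

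For optimality of the constant, the equality cases of the two inequalities above already suggest the extremal shape: $R$ should be exponential on $[c,p_c]$, and the tail $\int_{p_c}^\infty R$ should be negligible so that $T(c)$ collapses onto $(p_c-c)R(p_c)$. Concretely, I would take $F$ to be the exponential distribution truncated at some $A$ (with an atom of mass $e^{-A}$ at $A$), paired with $G$ a point mass at $c=A-1$; a direct computation yields $p_c=A$, $T(c)=e^{-A}$ and $\int_c^\infty R(v)\,\d v=(e-1)e^{-A}$, hitting $e-1$ exactly. To stay within strictly MHR continuous distributions, replace the truncation by a steep exponential tail of rate $\beta$ beyond $A$ and let $\beta\to\infty$; the ratio $\mathsf{FB}/\mathsf{SellerP}$ then tends to $e-1$ along this sequence.

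The hardest step is conceptual rather than technical: recognizing that the FOC $h(p_c)(p_c-c)=1$ is calibrated precisely so that the integral of the log-concave tangent envelope over $[c,p_c]$ evaluates to $(e-1)$ times $(p_c-c)R(p_c)$. The tightness direction is less deep but requires the observation that the slack in the weakening $T(c)\ge (p_c-c)R(p_c)$ can be driven to zero by concentrating essentially all mass of $F$ on $[c,p_c]$, which the truncated-exponential construction achieves.
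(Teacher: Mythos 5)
Your proposal is correct and is essentially the paper's own argument: the pointwise-over-$c$ reduction, the identity $h(p_c)=1/(p_c-c)$ from the first-order condition, the log-concavity (MHR) tangent bound $R(v)\le R(p_c)\exp\bigl((p_c-v)/(p_c-c)\bigr)$ on $[c,p_c]$, and the integral $\int_0^1 e^u\,\d u=e-1$ all appear in the paper, just packaged there via the cumulative hazard $H=-\log R$ rather than the tangent line, and via a term-by-term comparison $\mathsf{FB}(c)=A+B$, $\mathsf{SellerP}(c)=A'+B$ with $A\le(e-1)A'$ instead of your $T(c)$ rearrangement. The optimality example is also the same in spirit (exponential $F$ with the seller's price pushed to the right endpoint, $G$ a point mass calibrated so $p_c$ lands there); one small technicality is that your ``steep exponential tail of rate $\beta$'' beyond $A$ leaves the density discontinuous at $A$, so it does not literally lie in $\mathcal{MHR}$ as the paper defines it (continuous positive density) --- the paper instead glues in a quadratic on $[1-\delta,1]$ to get a genuinely continuous-density MHR family, but your construction is easily repaired in the same way.
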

\subsection{More Related Work}
One might also ask about how much gains-from-trade would be lost if we use simple mechanisms instead of the known second-best mechanism. That is, the worst-case approximation ratios of simple mechanisms to $\mathsf{SB}$ are also of interest. \cite*{BCWZ:2017} show that
$\inf_{F,G}(\mathsf{RandOff}/\mathsf{SB})=\frac{1}{2}$.

There is also a line of research concerning the approximation of welfare, instead of gains-from-trade. Since we have
$$\mathsf{Welfare}=\E_{\substack{v\sim F\\ c\sim G}}\left[v\cdot x(v,c)+c\cdot\Big(1-x(v,c)\Big)\right]=\E_{c\sim G}[c]+\mathsf{GFT},$$
maximizing welfare is equivalent to maximizing gains from trade. In addition, providing a constant approximation to the first-best $\mathsf{GFT}$ also provides a constant approximation to the first-best welfare (but not vice versa). \citet{BD:2016} show that the fixed-price mechanism provides a $\left(1-\frac{1}{e}\right)$-approximation to the first-best welfare (improved to $1-\frac{1}{e}+0.0001$ by \cite*{KPV:2021}). \citet{BM:2016} show that the first-best welfare is inapproximable to above a fraction of 0.934 by BIC, IR and WBB mechanisms. 
\section{Bounding the Approximation Ratio}
\label{sec:proof1}
In this section, we will prove the main result
$\mathsf{FB}\leq 3.15\cdot\mathsf{RandOff}$.
As pointed out by \cite*{DMSW:2022}, as far as $\mathsf{SellerP},\mathsf{BuyerP}$ and $\mathsf{FB}$ are concerned, it is without loss of generality to assume that the distribution of $v$ and $c$ are supported on $[0,1]$ and have continuous and positive densities.
\subsection{Notational Preparations}
\begin{enumerate}
\item
The major advantage of assuming the existence of positive densities is that we can define the following ``quantile  function'', a powerful tool for analysis introduced by \cite*{DMSW:2022}: for each $x\in[0,1]$, define $\mu(x)$ to be the $(1-\lambda)$-quantile of $F|_{\geq x}$, i.e.
$$\mu(x)=F^{-1}\Big(1-\lambda+\lambda F(x)\Big),$$
where $\lambda\in(0,1)$ is a parameter to be chosen. Since $F$ is strictly increasing and continuous on $[0,1]$, so are $F^{-1}$ and $\mu$. Then, since $\mu$ is strictly increasing and continuous on [0,1], we can also define its inverse $\mu^{-1}$ on $[\mu(0),\mu(1)]=[\mu(0),1]$. 
\item
Let $\mathsf{SProfit}$ denote the maximum profit that the seller can gain in a seller-pricing mechanism, and let $\mathsf{BProfit}$ denote the maximum utility that the buyer can gain in a buyer-pricing mechanism. Obviously,
$\mathsf{SProfit}\leq\mathsf{SellerP}$ and $\mathsf{BProfit}\leq\mathsf{BuyerP}$. Since the quantities $\mathsf{SProfit}$ and $\mathsf{BProfit}$ are much easier to handle than $\mathsf{SellerP}$ and $\mathsf{BuyerP}$, we will prove the result $\mathsf{FB}\leq 3.15\cdot\left(\frac{1}{2}\mathsf{SellerP}+\frac{1}{2}\mathsf{BuyerP}\right)$ by showing $\mathsf{FB}\leq 3.15\cdot\left(\frac{1}{2}\mathsf{SProfit}+\frac{1}{2}\mathsf{BProfit}\right)$ instead.
\end{enumerate}

\subsection{Proof of Main Theorem}
The main theme in the proof is to put everything as an integration over the seller's cost $c$. Note that the definition of $\mathsf{SProfit}$ is already of this form, and $\mathsf{FB}$ can also easily be expressed in this form. \cite*{DMSW:2022} notice that $\mathsf{BProfit}$ can also be put into this form (at the expense of possibly shrinking it a little). We state it as follows:

\begin{lemma}\label{lemma:integration-over-c}
Let $\mu$ be the quantile function defined by any $\lambda\in(0,1)$. Then
$$\int_{0}^{1}\int_{\mu(c)}^{1}\Big(s-\mu^{-1}(s)\Big) \d F(s) \d G(c)\leq \mathsf{BProfit}.$$
\begin{proof}
By Fubini-Tonelli theorem, we can change the order of integration as follows:
\begin{align*}
&\int_{0}^{1}\int_{\mu(c)}^{1}\Big(s-\mu^{-1}(s)\Big)\d F(s)\d G(c)\\
=& \int_{\mu(0)}^{1}\int_{0}^{\mu^{-1}(v)}\Big(v-\mu^{-1}(v)\Big)\d G(c)\d F(v)\\
=& \int_{\mu(0)}^{1}\Big(v-\mu^{-1}(v)\Big)G(\mu^{-1}(v))\d F(v)\\
\leq& \int_{\mu(0)}^{1}\max_{p}(v-p)G(p)\d F(v)\\
\leq& \mathsf{BProfit}.\qedhere
\end{align*}
\end{proof}
\end{lemma}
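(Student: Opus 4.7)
The plan is to rewrite the left-hand side as an integral over the buyer's value $v$ using Fubini--Tonelli, and then to bound the resulting integrand pointwise by the buyer's optimal posted-price utility.

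First I would exchange the order of integration. Because $\mu$ is strictly increasing and continuous on $[0,1]$ with image $[\mu(0),1]$, the region $\{(c,s):0\le c\le 1,\ \mu(c)\le s\le 1\}$ coincides with $\{(c,s):\mu(0)\le s\le 1,\ 0\le c\le \mu^{-1}(s)\}$. Performing the inner integration in $c$ against $\d G$ (note that $G(0)=0$ since the support lies in $[0,1]$) and renaming the remaining variable $s$ to $v$, the left-hand side becomes
$$\int_{\mu(0)}^{1}\bigl(v-\mu^{-1}(v)\bigr)\,G(\mu^{-1}(v))\,\d F(v).$$

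Second, I would interpret the integrand as the utility of a suboptimal buyer strategy. The quantity $(v-\mu^{-1}(v))\,G(\mu^{-1}(v))$ is precisely what a buyer with value $v$ would earn by posting the specific price $p=\mu^{-1}(v)$: trade happens exactly when $c\le\mu^{-1}(v)$, which occurs with probability $G(\mu^{-1}(v))$, and conditional on trade the surplus is $v-\mu^{-1}(v)$. Since this is one particular choice of $p$, it is dominated pointwise by $\max_{p}(v-p)G(p)$. Integrating this inequality against $\d F(v)$, and if desired enlarging the range from $[\mu(0),1]$ to $[0,1]$ using nonnegativity of the maximum, produces exactly the definition of $\mathsf{BProfit}$.

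There is no genuine obstacle here: the quantile function $\mu$ is engineered precisely so that a Fubini swap followed by a single pointwise suboptimality comparison closes the argument. The only bookkeeping point is justifying the change of region, which rests on $\mu$ being a continuous, strictly increasing bijection from $[0,1]$ onto $[\mu(0),1]$ — a direct consequence of the standing assumption that $F$ has a continuous positive density on $[0,1]$.
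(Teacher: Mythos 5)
Your proposal is correct and follows essentially the same route as the paper: a Fubini swap to integrate over the buyer's value, evaluation of the inner $\d G$ integral, and a pointwise bound by the buyer's optimal posted-price utility $\max_p (v-p)G(p)$. The economic interpretation of $(v-\mu^{-1}(v))G(\mu^{-1}(v))$ as a suboptimal buyer posting is a nice explicit gloss on what the paper leaves implicit, but the underlying argument is identical.
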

In light of \Cref{lemma:integration-over-c}, we can now put all three quantities $\mathsf{FB},\mathsf{SProfit}$ and $\mathsf{BProfit}$ into integrations over $\d G(c)$. The integrands are:
\begin{align*}
\mathsf{FB}(c)&:=\int_{c}^{1}(v-c)\d F(v),\\
\mathsf{SProfit}(c)&:=\max_{p}(p-c)(1-F(p)),\\
\mathsf{BProfit}(c)&:=\int_{\mu(c)}^{1}\Big(s-\mu^{-1}(s)\Big) \d F(s).
\end{align*}
We have (the first two directly follow from the definition of $\mathsf{FB}$ and $\mathsf{SProfit}$, while the third follows from \Cref{lemma:integration-over-c})
\begin{align*}
\mathsf{FB}=&\int_{0}^{1}\mathsf{FB}(c)\d G(c),\\
\mathsf{SProfit}=&\int_{0}^{1}\mathsf{SProfit}(c)\d G(c),\\
\mathsf{BProfit}\geq&\int_{0}^{1}\mathsf{BProfit}(c)\d G(c).
\end{align*}
However, the expression of $\mathsf{BProfit}(c)$ looks nothing like $\mathsf{FB}(c)$ or $\mathsf{SProfit}(c)$. Our next step is to transform it into a more familiar form:  
\begin{lemma}\label{lemma:transforming}
Let $\mu$ be the quantile function defined by any $\lambda\in(0,1)$. Then for any $c\in[0,1]$,
$$\mathsf{BProfit}(c)=(1-\lambda)\cdot\mathsf{FB}(c)-\int_{c}^{\mu(c)}(v-c)\d F(v).$$
\end{lemma}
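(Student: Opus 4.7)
My plan is a short direct calculation. The only non-trivial ingredient is the quantile identity $F(\mu(v)) = 1 - \lambda + \lambda F(v)$, which in differential form reads
$$dF(\mu(v)) = \lambda\, dF(v).$$
This means that the pushforward of the measure $dF$ under $\mu$ is (up to a factor of $\lambda$) still $dF$, which is exactly what we need to convert the $\mu^{-1}$ in the integrand of $\mathsf{BProfit}(c)$ into something recognisable.

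Concretely, the first step is the algebraic splitting
$$s - \mu^{-1}(s) = (s - c) - (\mu^{-1}(s) - c),$$
which breaks $\mathsf{BProfit}(c)$ into two pieces:
$$\mathsf{BProfit}(c) = \int_{\mu(c)}^{1}(s - c)\, dF(s) \;-\; \int_{\mu(c)}^{1}(\mu^{-1}(s) - c)\, dF(s).$$
The first piece is almost $\mathsf{FB}(c)$: since $\mathsf{FB}(c) = \int_c^{1}(v-c)\,dF(v)$ and $\mu(c) \geq c$,
$$\int_{\mu(c)}^{1}(s-c)\, dF(s) = \mathsf{FB}(c) - \int_{c}^{\mu(c)}(v-c)\, dF(v),$$
which is exactly the last term we want to see.

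The second piece is where the quantile identity does the work. Substitute $v = \mu^{-1}(s)$, i.e.\ $s = \mu(v)$, so that the limits change from $s \in [\mu(c),1]$ to $v \in [c,1]$, and $dF(s) = dF(\mu(v)) = \lambda\, dF(v)$. This gives
$$\int_{\mu(c)}^{1}(\mu^{-1}(s) - c)\, dF(s) = \lambda\int_{c}^{1}(v - c)\, dF(v) = \lambda\cdot\mathsf{FB}(c).$$
Subtracting the two pieces yields $(1-\lambda)\mathsf{FB}(c) - \int_c^{\mu(c)}(v-c)\,dF(v)$, as desired.

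There is no real obstacle here: the only potentially unfamiliar move is to recognise that the pushforward of $F$ under the quantile map $\mu$ scales $dF$ by $\lambda$. Once that change of variables is in hand, the splitting $(s - \mu^{-1}(s)) = (s-c) - (\mu^{-1}(s)-c)$ is the natural thing to try because it lets both halves be expressed in terms of $\mathsf{FB}(c)$, and the identity then falls out of a one-line calculation.
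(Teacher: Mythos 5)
Your proof is correct, and it takes a genuinely different route from the paper's. The paper rewrites $\mathsf{BProfit}(c)$ as a double integral via $s-\mu^{-1}(s)=\int_{\mu^{-1}(s)}^{s}\d t$, applies Fubini--Tonelli to swap the order, splits the resulting planar region into two pieces, and only then uses the quantile identity in the form $F(\mu(t))-F(t)=(1-\lambda)\bigl(1-F(t)\bigr)$. You instead use the algebraic split $s-\mu^{-1}(s)=(s-c)-(\mu^{-1}(s)-c)$ and a single change of variables $s=\mu(v)$, which hinges on the pushforward identity $\d F(\mu(v))=\lambda\,\d F(v)$ (the differential form of $F(\mu(v))=1-\lambda+\lambda F(v)$). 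The endpoints check out because $\mu(1)=F^{-1}(1)=1$, so the second piece becomes $\lambda\,\mathsf{FB}(c)$ cleanly. Both proofs ultimately lean on the same quantile identity, but yours reaches the conclusion in two lines of calculation where the paper needs a chain of Fubini manipulations and a region decomposition; the paper's route makes the region geometry explicit, while yours is shorter and arguably more transparent once the change of variables is recognized.
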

\begin{proof}
This follows from successive uses of Fubini-Tonelli theorem:
\begin{align*}
&\mathsf{BProfit}(c)\\=&\int_{\mu(c)}^{1}(s-\mu^{-1}(s))\d F(s)\\
=&\int\int_{\substack{\mu(c)\leq s\leq 1 \\ \mu^{-1}(s)\leq t\leq s}}1\d t\d F(s)\\
=&\int\int_{\substack{c\leq t\leq \mu(c)\\ \mu(c)\leq s\leq \mu(t)}}1\d t\d F(s)+\int\int_{\substack{\mu(c)<t<1\\ t\leq s\leq \mu(t)}}1\d t\d F(s)\\
=&\int_{c}^{\mu(c)}\Big(F(\mu(t))-F(\mu(c))\Big)\d t+
\int_{\mu(c)}^{1}\Big(F(\mu(t))-F(t)\Big)\d t\\
=&\int_{c}^{\mu(c)}\Big(F(\mu(t))-F(t)\Big)\d t-
\int_{c}^{\mu(c)}\Big(F(\mu(c))-F(t)\Big)\d t+
\int_{\mu(c)}^{1}\Big(F(\mu(t))-F(t)\Big)\d t\\
=&\int_{c}^{1}\Big(F(\mu(t))-F(t)\Big)\d t-\int_{c}^{\mu(c)}\Big(F(\mu(c))-F(t)\Big)\d t\\
=&\int_{c}^{1}(1-\lambda)\Big(1-F(t)\Big)\d t-\int_{c}^{\mu(c)}\Big(F(\mu(c))-F(t)\Big)\d t\\
=&(1-\lambda)\int_{c}^{1}\int_{t}^{1}1\d F(v)\d t-\int_{c}^{\mu(c)}\int_{t}^{\mu(c)}1\d F(v)\d t\\
=&(1-\lambda)\int_{c}^{1}\int_{c}^{v}1\d t\d F(v)-\int_{c}^{\mu(c)}\int_{c}^{v}1\d t\d F(v)\\
=&(1-\lambda)\int_{c}^{1}(v-c)\d F(v)-\int_{c}^{\mu(c)}(v-c)\d F(v)\\
=&(1-\lambda)\cdot\mathsf{FB}(c)-\int_{c}^{\mu(c)}(v-c)\d F(v).\qedhere
\end{align*}
\end{proof}
\begin{remark}
The work by \cite*{DMSW:2022} also contains a major part that transforms $\mathsf{BProfit}(c)$ into a nicer form. A key step in their transformation is to use summations to bound integrations, which incurs a loss in the constant. In comparison, our \Cref{lemma:transforming} only involves integrations and is completely lossless.
\end{remark}
Note that the minuend on the right hand side of the preceding lemma is already a fraction of $\mathsf{FB}(c)$. The next lemma shows that the diminution caused by the subtrahend is under control:   
\begin{lemma}\label{lemma:controlling}
Let $\mu$ be the quantile function defined by any $\lambda\in(0,1)$. Then for any $c\in[0,1]$,
$$\int_{c}^{\mu(c)}(v-c)\d F(v)\leq \ln\frac{1}{\lambda}\cdot\mathsf{SProfit}(c).$$
\end{lemma}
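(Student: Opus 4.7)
The plan is to bound the integrand pointwise using the definition of $\mathsf{SProfit}(c)$, and then evaluate the resulting integral in closed form using the defining property of the quantile function $\mu$.

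First, I would observe that for every candidate price $p$, the definition gives $(p-c)(1-F(p))\leq\mathsf{SProfit}(c)$. Specializing $p=v$ for any $v\in[c,\mu(c)]$ and solving for $v-c$ (which is legitimate because $v\leq\mu(c)<1$ forces $1-F(v)>0$ whenever $c<1$; the case $c=1$ is trivial since the interval degenerates), I get
\[
v-c \;\leq\; \frac{\mathsf{SProfit}(c)}{1-F(v)}.
\]
Plugging this pointwise bound into the left-hand side of the target inequality yields
\[
\int_{c}^{\mu(c)}(v-c)\d F(v) \;\leq\; \mathsf{SProfit}(c)\cdot\int_{c}^{\mu(c)}\frac{\d F(v)}{1-F(v)}.
\]

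Next I would evaluate the remaining integral. Recognizing the integrand as the differential of $-\ln(1-F(v))$ (or substituting $u=1-F(v)$), I get
\[
\int_{c}^{\mu(c)}\frac{\d F(v)}{1-F(v)} \;=\; \ln\frac{1-F(c)}{1-F(\mu(c))}.
\]
Finally, I would invoke the defining relation $F(\mu(c))=1-\lambda+\lambda F(c)$, which rearranges to $1-F(\mu(c))=\lambda\bigl(1-F(c)\bigr)$. This collapses the log ratio to $\ln(1/\lambda)$ and gives exactly the claimed bound.

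There is no substantive obstacle: the whole argument is a two-line pointwise estimate followed by an elementary logarithmic integration, and all the work has already been done upstream in setting up $\mu$ so that $1-F(\mu(c))$ is exactly $\lambda(1-F(c))$. The only thing to be careful about is the edge case where $c=1$ (so $F(c)=1$ and the interval of integration collapses), which is handled trivially, and checking that $1-F(v)>0$ on the open integration range, which follows from the earlier assumption that $F$ is strictly increasing on $[0,1]$.
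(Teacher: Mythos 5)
Your proof is correct and is essentially the same as the paper's: both bound $v-c$ pointwise by $\mathsf{SProfit}(c)/(1-F(v))$ for $v\in[c,\mu(c)]$, reduce the integral to $\int_c^{\mu(c)}\d F(v)/(1-F(v))=\ln\frac{1-F(c)}{1-F(\mu(c))}$, and then use the defining relation $1-F(\mu(c))=\lambda(1-F(c))$ to obtain $\ln(1/\lambda)$.
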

\begin{proof}
We resort to the definition of $\mathsf{SProfit}(c)$:
\begin{align*}
\int_{c}^{\mu(c)}(v-c)\d F(v)
&\leq\int_{c}^{\mu(c)}\frac{1}{1-F(v)}\cdot\max_{p}(p-c)(1-F(p))\d F(v)\\
&=\mathsf{SProfit}(c)\cdot\int_{c}^{\mu(c)}\frac{\d F(v)}{1-F(v)}\\
&=\mathsf{SProfit}(c)\cdot\ln\left(\frac{1-F(c)}{1-F(\mu(c))}\right)\\
&=\ln\frac{1}{\lambda}\cdot\mathsf{SProfit}(c)\qedhere
\end{align*}
\end{proof}
Now we can prove the theorem by combining the preceding three lemmas.
\begin{theorem}
$\mathsf{FB}\leq 3.15\cdot\mathsf{RandOff}$.
\end{theorem}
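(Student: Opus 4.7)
My strategy is to chain the three preceding lemmas into a single inequality that relates $\mathsf{FB}$ to a weighted sum of $\mathsf{SProfit}$ and $\mathsf{BProfit}$, then symmetrize to turn the weighted sum into a symmetric sum that matches $\mathsf{RandOff}$, and finally optimize over the free parameter $\lambda$.

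\paragraph{Step 1: Combine the pointwise lemmas.} Rearranging \Cref{lemma:transforming} and plugging in the bound from \Cref{lemma:controlling} gives, for every $c \in [0,1]$,
\begin{equation*}
(1-\lambda)\,\mathsf{FB}(c) \;=\; \mathsf{BProfit}(c) + \int_{c}^{\mu(c)}(v-c)\,\mathrm{d}F(v) \;\leq\; \mathsf{BProfit}(c) + \ln\tfrac{1}{\lambda}\cdot \mathsf{SProfit}(c).
\end{equation*}
Integrating against $\mathrm{d}G(c)$ and using \Cref{lemma:integration-over-c} together with the trivial inequality $\mathsf{BProfit}\leq\mathsf{BuyerP}$, I obtain the global bound
\begin{equation*}
(1-\lambda)\,\mathsf{FB} \;\leq\; \mathsf{BProfit} + \ln\tfrac{1}{\lambda}\cdot\mathsf{SProfit}.
\end{equation*}

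\paragraph{Step 2: Apply the buyer--seller symmetry.} The derivation above treats buyer and seller asymmetrically (the quantile $\mu$ is built from $F$ only). By \Cref{rem:symmetry}, replacing $v$ by $-c$ and $c$ by $-v$ preserves $\mathsf{FB}$ while swapping the roles of $\mathsf{SProfit}$ and $\mathsf{BProfit}$. Running the entire argument with this swap (now defining the quantile function from $G$) yields the twin inequality
\begin{equation*}
(1-\lambda)\,\mathsf{FB} \;\leq\; \mathsf{SProfit} + \ln\tfrac{1}{\lambda}\cdot\mathsf{BProfit}.
\end{equation*}

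\paragraph{Step 3: Average and relate to $\mathsf{RandOff}$.} Adding the two bounds and using $\mathsf{SProfit}+\mathsf{BProfit}\leq\mathsf{SellerP}+\mathsf{BuyerP}=2\,\mathsf{RandOff}$ gives
\begin{equation*}
2(1-\lambda)\,\mathsf{FB} \;\leq\; \bigl(1+\ln\tfrac{1}{\lambda}\bigr)\bigl(\mathsf{SProfit}+\mathsf{BProfit}\bigr) \;\leq\; 2\bigl(1+\ln\tfrac{1}{\lambda}\bigr)\,\mathsf{RandOff},
\end{equation*}
so $\mathsf{FB}\leq\dfrac{1+\ln(1/\lambda)}{1-\lambda}\,\mathsf{RandOff}$ for every $\lambda\in(0,1)$. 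The symmetrization is the step doing real work: without it one is forced to take the max of the two coefficients, which at the best $\lambda=1/e$ only yields the weaker constant $2e/(e-1)\approx 3.164$.

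\paragraph{Step 4: Optimize $\lambda$.} Setting $x:=1/\lambda$, the coefficient becomes $\frac{x(1+\ln x)}{x-1}$. Differentiating and setting to zero reduces to the transcendental equation $x-\ln x=2$, at which point $1+\ln x = x-1$ and so the coefficient equals exactly $x$. Solving $x-\ln x=2$ numerically gives $x\approx 3.1462<3.15$, establishing $\mathsf{FB}\leq 3.15\cdot\mathsf{RandOff}$. The only genuine obstacle I anticipate is verifying the symmetry step rigorously (ensuring the construction of a $G$-based quantile function truly produces the claimed twin inequality); the rest is arithmetic.
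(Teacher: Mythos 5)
Your proof is correct and follows essentially the same path as the paper's: combine \Cref{lemma:transforming} and \Cref{lemma:controlling} pointwise, integrate against $\mathrm{d}G(c)$ using \Cref{lemma:integration-over-c}, invoke \Cref{rem:symmetry} to obtain the twin inequality, sum, and minimize over $\lambda$. The only cosmetic difference is that you symmetrize while still at the level of $\mathsf{SProfit}$/$\mathsf{BProfit}$ whereas the paper first passes to $\mathsf{SellerP}$/$\mathsf{BuyerP}$, and your observation that the optimum satisfies $x-\ln x=2$ with optimal value exactly $x$ is a nice explicit form of the same numerical minimization.
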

\begin{proof}
Let $\mu$ be the quantile function defined by any $\lambda\in(0,1)$. We have
\begin{align*}
\mathsf{FB}(c)=&\frac{1}{1-\lambda}\left((1-\lambda)\cdot\mathsf{FB}(c)-\int_{c}^{\mu(c)}(v-c)\d F(v)\right)+\frac{1}{1-\lambda}\int_{c}^{\mu(c)}(v-c)\d F(v)\\
\leq & \frac{1}{1-\lambda}\mathsf{BProfit}(c)+ \frac{1}{1-\lambda}\ln\frac{1}{\lambda}\cdot\mathsf{SProfit}(c)\quad(\text{By \Cref{lemma:transforming} and \Cref{lemma:controlling}}).
\end{align*}
Therefore,
\begin{align*}
\mathsf{FB}&=\int_{0}^{1}\mathsf{FB}(c)\d G(c)\\
&\leq \frac{1}{1-\lambda}\int_{0}^{1}\mathsf{BProfit}(c)\d G(c)+\frac{1}{1-\lambda}\ln\frac{1}{\lambda}\int_{0}^{1}\mathsf{SProfit}(c)\\
&\leq \frac{1}{1-\lambda}\mathsf{BProfit}+\frac{1}{1-\lambda}\ln\frac{1}{\lambda}\cdot\mathsf{SProfit}\\
&\leq \frac{1}{1-\lambda}\mathsf{BuyerP}+\frac{1}{1-\lambda}\ln\frac{1}{\lambda}\cdot\mathsf{SellerP}.
\end{align*}
By the symmetry described in \Cref{rem:symmetry}, we also have
$$\mathsf{FB}\leq\frac{1}{1-\lambda}\mathsf{SellerP}+\frac{1}{1-\lambda}\ln\frac{1}{\lambda}\cdot\mathsf{BuyerP}.$$
Adding up the preceding two inequalities, we get
\begin{align*}
\mathsf{FB}&\leq\left(\frac{1}{1-\lambda}+\frac{1}{1-\lambda}\ln\frac{1}{\lambda}\right)\left(\frac{1}{2}\mathsf{SellerP}+\frac{1}{2}\mathsf{BuyerP}\right)\\
&=\left(\frac{1}{1-\lambda}+\frac{1}{1-\lambda}\ln\frac{1}{\lambda}\right)\mathsf{RandOff}.
\end{align*}
Note that the above holds for all $\lambda\in(0,1)$. Thus the conclusion follows by calculating
\[\min_{0<\lambda<1}\left(\frac{1}{1-\lambda}+\frac{1}{1-\lambda}\ln\frac{1}{\lambda}\right)\approx 3.1462.\qedhere\]
\end{proof}
\section{Approximation Ratio under the MHR Condition}
\label{sec:proof2}
In \cref{sec:proof1}, we use the quantities $\mathsf{SProfit}$ and $\mathsf{BProfit}$ to lower-bound $\mathsf{SellerP}$ and $\mathsf{BuyerP}$, a (painful) compromise we make in the face of the difficulty in analyzing $\mathsf{SellerP}$ and $\mathsf{BuyerP}$ themselves. In this section, we will see that by imposing a restriction on the distribution of the buyer's value, one can significantly reduce the difficulty of analysis.
\subsection{Preliminaries}

We state the definition of the \textit{hazard rate} and the \textit{virtual value function}, which are commonly studied in auction theory since the work of \cite{Mye:1981}.  
\begin{definition}\label{def:MHR}
A distribution on $[0,1]$ with CDF $F$ and continuous and positive density function $f$ is said to have the \textit{monotone hazard rate} (MHR) property if the \textit{hazard rate} 
$$h(x)=\frac{f(x)}{1-F(x)}$$
is a monotone non-decreasing function of $x$.
\end{definition}

\begin{definition}
Define the \textit{virtual value function} of the buyer to be
$$\varphi(x)=x-\frac{1-F(x)}{f(x)}.$$
\end{definition}
It can be seen from this definition that the MHR property of $F$ implies that $\varphi$ is strictly increasing, and hence its inverse function $\varphi^{-1}$ exists on $[\varphi(0),1]\supset [0,1]$. What greatly reduces the difficulty of analysis is the following fact:
\begin{proposition}\label{prop:observe}
If the distribution of the buyer's value satisfies the MHR property, then the price $p_{c}$ that the seller would set given his cost $c$ is exactly $\varphi^{-1}(c)$. 
\end{proposition}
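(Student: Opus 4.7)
The plan is to set up and solve the first-order condition for the seller's profit and then use the MHR hypothesis to verify that the unique critical point is indeed a global maximizer.

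First I would treat the seller's problem as maximizing the smooth function $\pi(p):=(p-c)(1-F(p))$ over $p\in[0,1]$, using the fact that $F$ has a continuous, strictly positive density $f$. Differentiating gives
\[
\pi'(p)=(1-F(p))-(p-c)f(p)=f(p)\bigl(c-\varphi(p)\bigr),
\]
so any interior critical point satisfies $\varphi(p)=c$. Since $F$ is MHR, $\varphi$ is strictly increasing, so this equation has at most one solution, and by the note in the excerpt that $\varphi^{-1}$ is defined on all of $[0,1]$ (indeed on $[\varphi(0),1]\supset[0,1]$), it has exactly one solution $p^\star=\varphi^{-1}(c)$.

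Next I would upgrade this critical point to a global maximizer. Because $f>0$ and $\varphi$ is strictly increasing, the sign of $\pi'(p)=f(p)(c-\varphi(p))$ is positive for $p<p^\star$ and negative for $p>p^\star$. Hence $\pi$ is strictly increasing on $[0,p^\star]$ and strictly decreasing on $[p^\star,1]$, so $p^\star$ is the unique global maximizer on $[0,1]$. Prices $p>1$ contribute $1-F(p)=0$ and prices $p<0$ are dominated by $p=0$, so the restriction to $[0,1]$ loses nothing. This shows $p_c=\varphi^{-1}(c)$, as claimed.

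Finally I would add a short sanity remark that $p^\star\ge c$ (so the seller's optimal price is above cost and the mechanism is individually rational on the seller side): this follows from $\varphi(c)=c-\tfrac{1-F(c)}{f(c)}\le c$ and monotonicity of $\varphi$, which gives $\varphi^{-1}(c)\ge c$. There is no real obstacle in this argument; the only subtle point is making sure one invokes MHR exactly where it is needed, namely to guarantee that $\varphi$ is strictly monotone so that the first-order condition has a unique root and the derivative changes sign from positive to negative there.
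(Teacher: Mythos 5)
Your argument is correct and is essentially the same first-order-condition computation as the paper's proof, which likewise writes $\frac{\d}{\d p}(p-c)(1-F(p))=f(p)(c-\varphi(p))$ and invokes strict monotonicity of $\varphi$ under MHR to conclude the maximizer is $\varphi^{-1}(c)$. You merely spell out the sign change more explicitly and add the (true, but unstated-in-the-paper) sanity check $\varphi^{-1}(c)\ge c$.
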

\begin{proof}
This is immediate from Myerson's auction theory (\cite{Mye:1981}). Here we give a short explanation, for the sake of completeness. Given $c\in [0,1]$, we have for each $p\in[c,1]$
$$
\frac{\d}{\d p}(p-c)(1-F(p))\\
=(1-F(p))-(p-c)f(p)\\
= f(p)(c-\varphi(p)).
$$
Since $\varphi$ is strictly increasing, the function
$p\mapsto (p-c)(1-F(p)) $
has a unique maximum $p^{*}=\varphi^{-1}(c)$.
\end{proof}
Let $\mathcal{MHR}$ be the collection of all MHR distributions on $[0,1]$. Our goal in this section is to show that
$$\inf_{\substack{F\in\mathcal{MHR}\\G}}\frac{\mathsf{SellerP}}{\mathsf{FB}}=\frac{1}{e-1}.$$

In \Cref{subsec:lower-bound}, we will prove that when $F\in\mathcal{MHR}$, the inequality $\mathsf{FB}\leq(e-1)\cdot\mathsf{SellerP}$ holds. Then, in \Cref{subsec:upper-bound} we will prove that the constant $(e-1)$ is optimal in the above inequality.
\begin{remark}
We can assume that the distribution defined by $G$ is supported on $[0,1]$ as well. Indeed, if we ``truncate'' $G$ into $G_{\text{truncate}}(x)=\begin{cases}
0 &\text{if } x<0\\
G(x) &\text{if } 0\leq x<1\\
1 &\text{if } x\geq 1
\end{cases}$, the ratio $\mathsf{SellerP}/\mathsf{FB}$ will not increase. Since we are concerned with the infimum of this ratio, this truncation is without loss of generality.
\end{remark}

\subsection{Proof of Lower Bound}
\label{subsec:lower-bound}
In light of \Cref{prop:observe}, we can define
$$\mathsf{SellerP}(c)=\int_{\varphi^{-1}(c)}^{1}(v-c)\d F(v),$$
and from the definition of $\mathsf{SellerP}$ we have
$$\mathsf{SellerP}=\int_{0}^{1}\mathsf{SellerP}(c)\d G(c).$$
Since we also have
$$\mathsf{FB}=\int_{0}^{1}\mathsf{FB}(c)\d G(c),$$
it suffices to show that $\mathsf{FB}(c)\leq (e-1)\cdot\mathsf{SellerP}(c)$ for each $c\in[0,1]$. Integrating by parts, we have
\begin{equation}\label{eq:1}
\begin{split}
\mathsf{FB}(c)&=\int_{c}^{1}(v-c)\d F(v)=(1-c)-\int_{c}^{1}F(v)\d (v-c)=\int_{c}^{1}(1-F(v))\d v\\
&=\int_{c}^{\varphi^{-1}(c)}(1-F(v))\d v+\int_{\varphi^{-1}(c)}^{1}(1-F(v))\d v
\end{split}
\end{equation}
and
\begin{equation}\label{eq:2}
\begin{split}
\mathsf{SellerP}(c)&=\int_{\varphi^{-1}(c)}^{1}(v-c)\d F(v)\\
&=(1-c)-\Big(\varphi^{-1}(c)-c\Big)F\left(\varphi^{-1}(c)\right)-\int_{\varphi^{-1}(c)}^{1}F(v)\d (v-c)\\
&=\int_{c}^{1}1\d v-\int_{c}^{\varphi^{-1}(c)}F\left(\varphi^{-1}(c)\right)\d v-\int_{\varphi^{-1}(c)}^{1}F(v)\d v\\
&=\int_{c}^{\varphi^{-1}(c)}\Big(1-F\left(\varphi^{-1}(c)\right)\Big)\d v+\int_{\varphi^{-1}(c)}^{1}\left(1-F(v)\right)\d v.
\end{split}
\end{equation}
Note that the right hand sides of $\cref{eq:2}$ and $\cref{eq:1}$ already have the second term in common. The next lemma relates their first terms also to each other.
\begin{lemma}\label{lemma:relates}
When $0\leq c\leq v\leq \varphi^{-1}(c),$ we have
$$\frac{1-F(v)}{1-F\left(\varphi^{-1}(c)\right)}\leq\exp\left(\frac{\varphi^{-1}(c)-v}{\varphi^{-1}(c)-c}\right).$$
\end{lemma}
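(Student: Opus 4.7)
The plan is to take logarithms of both sides and reduce the claim to an estimate on the integral of the hazard rate $h = f/(1-F)$. Specifically, since $1-F$ is absolutely continuous and positive, we can write
\[
\log\frac{1-F(v)}{1-F(\varphi^{-1}(c))} \;=\; \int_{v}^{\varphi^{-1}(c)} \frac{f(t)}{1-F(t)}\,\d t \;=\; \int_{v}^{\varphi^{-1}(c)} h(t)\,\d t,
\]
so the desired inequality is equivalent to
\[
\int_{v}^{\varphi^{-1}(c)} h(t)\,\d t \;\leq\; \frac{\varphi^{-1}(c)-v}{\varphi^{-1}(c)-c}.
\]

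Next I would use the MHR hypothesis: since $h$ is non-decreasing and $t\leq \varphi^{-1}(c)$ throughout the integration range, we have the crude bound
\[
\int_{v}^{\varphi^{-1}(c)} h(t)\,\d t \;\leq\; \bigl(\varphi^{-1}(c)-v\bigr)\cdot h\bigl(\varphi^{-1}(c)\bigr).
\]
So it suffices to show $h(\varphi^{-1}(c)) \leq 1/(\varphi^{-1}(c)-c)$. But this is in fact an equality that falls directly out of the definition of $\varphi$: evaluating $\varphi(\varphi^{-1}(c))=c$ gives
\[
\varphi^{-1}(c) - \frac{1-F(\varphi^{-1}(c))}{f(\varphi^{-1}(c))} \;=\; c, \qquad\text{i.e.}\qquad \varphi^{-1}(c) - c \;=\; \frac{1}{h(\varphi^{-1}(c))}.
\]
Combining the two displays yields the lemma.

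There is essentially no obstacle here: the only subtle point is checking that MHR is invoked in the right direction (bounding $h$ above by its value at the right endpoint, which is legitimate precisely because $h$ is non-decreasing), and recognizing that the definition of $\varphi^{-1}(c)$ is exactly the identity that matches the denominator $\varphi^{-1}(c)-c$ appearing in the target bound. Note also that the case $v=\varphi^{-1}(c)$ is trivial (both sides equal $1$), and the case $v=c$ gives the natural extremal bound $e$, which is what ultimately produces the constant $e-1$ in the main theorem after integration.
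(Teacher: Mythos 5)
Your proof is correct and follows essentially the same route as the paper: the paper introduces the cumulative hazard function $H(x)=-\ln(1-F(x))$, bounds $H'(x)=h(x)$ by $h(\varphi^{-1}(c))=1/(\varphi^{-1}(c)-c)$ via MHR and the defining identity for $\varphi$, and integrates from $v$ to $\varphi^{-1}(c)$ before exponentiating. This is exactly your log-integral argument with the integral of $h$ written out explicitly.
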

\begin{proof}
By definition of the function $\varphi$, 
$$c=\varphi\left(\varphi^{-1}(c)\right)=\varphi^{-1}(c)-\frac{1}{h\left(\varphi^{-1}(c)\right)}.$$
Hence we get a nice expression for the hazard rate at $\varphi^{-1}(c)$:
$$h\left(\varphi^{-1}(c)\right)=\frac{1}{\varphi^{-1}(c)-c}.$$
Define a function $H(x)=-\ln (1-F(x))$ on $[0,1)$ (known as the \textit{cumulative hazard function}), we have for each $x\in[c,\varphi^{-1}(c)]$
$$H'(x)=\frac{f(x)}{1-F(x)}=h(x)\leq h\left(\varphi^{-1}(c)\right)=\frac{1}{\varphi^{-1}(c)-c}.$$
Integrating with respect to $x$ both sides of the above inequality from $v$ to $\varphi^{-1}(c)$, we get
$$H\left(\varphi^{-1}(c)\right)-H(v)\leq \frac{\varphi^{-1}(c)-v}{\varphi^{-1}(c)-c},$$
or
\[\frac{1-F(v)}{1-F\left(\varphi^{-1}(c)\right)}\leq\exp\left(\frac{\varphi^{-1}(c)-v}{\varphi^{-1}(c)-c}\right).\qedhere\]
\end{proof}
\begin{remark}
The proof in \cite{BM:2016} also contains a step that is equivalent to the special case $v=c$ of the preceding lemma, but they apply it in a different way to a different framework of analysis. 
\end{remark}
\begin{theorem}\label{thm:e-1}
When $F\in\mathcal{MHR}$, the inequality $\mathsf{FB}\leq(e-1)\cdot\mathsf{SellerP}$ holds.
\end{theorem}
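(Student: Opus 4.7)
The plan is to prove the inequality pointwise in $c$, that is, to show $\mathsf{FB}(c)\leq (e-1)\cdot\mathsf{SellerP}(c)$ for every $c\in[0,1]$, and then integrate against $\d G(c)$ using the representations already established. The key observation is that the right-hand sides of \cref{eq:1} and \cref{eq:2} share the common term $\int_{\varphi^{-1}(c)}^{1}(1-F(v))\d v$, so the only work is to compare the two ``first'' terms
\[
\int_{c}^{\varphi^{-1}(c)}(1-F(v))\d v \quad\text{versus}\quad \int_{c}^{\varphi^{-1}(c)}\bigl(1-F(\varphi^{-1}(c))\bigr)\d v = \bigl(1-F(\varphi^{-1}(c))\bigr)\bigl(\varphi^{-1}(c)-c\bigr).
\]

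The tool is \Cref{lemma:relates}, which gives the pointwise bound $1-F(v)\leq (1-F(\varphi^{-1}(c)))\exp\bigl(\frac{\varphi^{-1}(c)-v}{\varphi^{-1}(c)-c}\bigr)$ on the interval $[c,\varphi^{-1}(c)]$. Integrating this inequality over $v$ and performing the substitution $u=(\varphi^{-1}(c)-v)/(\varphi^{-1}(c)-c)$ reduces the $v$-integral to $(\varphi^{-1}(c)-c)\int_{0}^{1}e^{u}\d u = (e-1)(\varphi^{-1}(c)-c)$. Thus the first term of $\mathsf{FB}(c)$ is bounded by exactly $(e-1)$ times the first term of $\mathsf{SellerP}(c)$.

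Combining this with the common second term (and using $e-1>1$ so the bound also holds on that piece), I obtain $\mathsf{FB}(c)\leq(e-1)\cdot\mathsf{SellerP}(c)$ for every $c$. Integrating both sides against $\d G(c)$ and invoking the two identities
\[
\mathsf{FB}=\int_{0}^{1}\mathsf{FB}(c)\d G(c),\qquad \mathsf{SellerP}=\int_{0}^{1}\mathsf{SellerP}(c)\d G(c),
\]
yields the theorem.

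The hardest step has really already been done in the setup: namely, decomposing $\mathsf{FB}(c)$ and $\mathsf{SellerP}(c)$ in \cref{eq:1} and \cref{eq:2} so that they share a common tail $\int_{\varphi^{-1}(c)}^{1}(1-F(v))\d v$, and extracting the MHR-driven exponential comparison in \Cref{lemma:relates}. Given those, the remainder is a one-line change of variables. The appearance of the constant $e-1$ as $\int_0^1 e^u\d u$ is also the right hint that this constant is tight, suggesting the extremal construction in \Cref{subsec:upper-bound} should force $1-F(v)$ to saturate the exponential envelope of \Cref{lemma:relates}, i.e.\ $F$ should behave like a (truncated) exponential distribution in the critical regime.
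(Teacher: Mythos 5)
Your proposal follows the paper's proof essentially verbatim: decompose $\mathsf{FB}(c)$ and $\mathsf{SellerP}(c)$ via \cref{eq:1} and \cref{eq:2}, bound the differing head term using \Cref{lemma:relates} and the substitution $u=(\varphi^{-1}(c)-v)/(\varphi^{-1}(c)-c)$ to produce the $\int_0^1 e^u\,\d u = e-1$ factor, absorb the common tail term using $e-1>1$, and integrate against $\d G(c)$. The argument is correct and is the same as the paper's.
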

\begin{proof}
By \Cref{lemma:relates}, for each $0\leq c<1$ and $c\leq v\leq\varphi^{-1}(c)$,
$$(1-F(v))\leq \exp\left(\frac{\varphi^{-1}(c)-v}{\varphi^{-1}(c)-c}\right)\cdot \Big(1-F\left(\varphi^{-1}(c)\right)\Big).$$
Integrating with respect to $v$ the above inequality from $c$ to $\varphi^{-1}(c)$, we get
\begin{align*}
&\int_{c}^{\varphi^{-1}(c)}(1-F(v))\d v\\\leq& \Big(1-F\left(\varphi^{-1}(c)\right)\Big)\cdot \int_{c}^{\varphi^{-1}(c)}\exp\left(\frac{\varphi^{-1}(c)-v}{\varphi^{-1}(c)-c}\right) \d v\\
=&\Big(1-F\left(\varphi^{-1}(c)\right)\Big)\cdot\Big(\varphi^{-1}(c)-c\Big)\cdot\int_{0}^{1}\exp(x)\d x \quad\left(\text{substituting }x=\frac{\varphi^{-1}(c)-v}{\varphi^{-1}(c)-c}\right)\\
=&(e-1)\cdot \int_{c}^{\varphi^{-1}(c)}\Big(1-F\left(\varphi^{-1}(c)\right)\Big)\d v.
\end{align*}
Therefore, combining the above with \cref{eq:1} and \cref{eq:2}, we have for each $c\in[0,1]$
\begin{align*}
\mathsf{FB}(c)&=\int_{c}^{\varphi^{-1}(c)}(1-F(v))\d v+\int_{\varphi^{-1}(c)}^{1}(1-F(v))\d v\\
&\leq (e-1)\int_{c}^{\varphi^{-1}(c)}\Big(1-F\left(\varphi^{-1}(c)\right)\Big)\d v+\int_{\varphi^{-1}(c)}^{1}\left(1-F(v)\right)\d v\\
&\leq (e-1)\int_{c}^{\varphi^{-1}(c)}\Big(1-F\left(\varphi^{-1}(c)\right)\Big)\d v+(e-1)\int_{\varphi^{-1}(c)}^{1}\left(1-F(v)\right)\d v\\
&=(e-1)\cdot\mathsf{SellerP}(c).
\end{align*}
It follows that $\mathsf{FB}\leq(e-1)\cdot\mathsf{SellerP}$.
\end{proof}
\subsection{Proof of Upper Bound}\label{subsec:upper-bound}
\begin{theorem}
The constant $(e-1)$ in \Cref{thm:e-1} is optimal.
\end{theorem}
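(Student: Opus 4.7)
The plan is to exhibit a one-parameter family of pairs $(F_\epsilon, G_\epsilon)$ with $F_\epsilon \in \mathcal{MHR}$ for which $\mathsf{FB}/\mathsf{SellerP} \to e-1$ as $\epsilon \to 0^+$. To find the right family, it helps to examine when the proof of \Cref{thm:e-1} is tight. Tracing through the argument, both inequalities used must be near-equalities: \Cref{lemma:relates} is tight exactly when the hazard rate is constant on $[c, \varphi^{-1}(c)]$, and the final step $\int_{\varphi^{-1}(c)}^1(1-F(v))\d v \leq (e-1)\int_{\varphi^{-1}(c)}^1(1-F(v))\d v$ is tight exactly when this tail integral is negligible compared with the first integral. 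Together these considerations point to a distribution whose hazard rate is constant on a long initial piece $[0, 1-\epsilon]$ and then rises almost vertically to $1$ on the short piece $[1-\epsilon, 1]$, paired with a seller's cost concentrated at $0$.

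Concretely, I would take $G_\epsilon$ to be (an approximation of) a point mass at $c=0$, and define $F_\epsilon$ on $[0,1]$ by declaring its density to be $f_\epsilon(v) = e^{-v/(1-\epsilon)}/(1-\epsilon)$ on $[0, 1-\epsilon]$ and $f_\epsilon(v) = e^{-1}/\epsilon$ on $[1-\epsilon, 1]$. The hazard rate of $F_\epsilon$ equals $1/(1-\epsilon)$ on the first piece and $1/(1-v)$ on the second; as long as $\epsilon \leq 1/2$ it is non-decreasing (it jumps up from $1/(1-\epsilon)$ to $1/\epsilon$ at $v = 1-\epsilon$), so $F_\epsilon \in \mathcal{MHR}$. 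A direct maximization of $p\bigl(1-F_\epsilon(p^-)\bigr)$ shows that the seller, facing cost $0$, sets price $p^* = 1-\epsilon$: on the first piece the function $p\cdot e^{-p/(1-\epsilon)}$ is single-peaked at $p=1-\epsilon$, and on the second piece the function $p\cdot e^{-1}(1-p)/\epsilon$ has its unique critical point at $p=1/2 < 1-\epsilon$ and is therefore decreasing there.

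With the optimal price identified, the remaining computation is routine: $\mathsf{SellerP}(0) = \int_{1-\epsilon}^1 v\, f_\epsilon(v)\d v = e^{-1}(1 - \epsilon/2)$, and $\mathsf{FB}(0) = \int_0^1 (1-F_\epsilon(v))\d v$ decomposes as $(1-\epsilon)(1-e^{-1}) + \epsilon e^{-1}/2$. Sending $\epsilon \to 0^+$ gives
\[
\frac{\mathsf{FB}}{\mathsf{SellerP}} \;\longrightarrow\; \frac{1-e^{-1}}{e^{-1}} \;=\; e-1,
\]
so the constant in \Cref{thm:e-1} cannot be improved.

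The main technical obstacle I foresee is purely cosmetic: $f_\epsilon$ has a jump at $v = 1-\epsilon$ and $G_\epsilon$ is a point mass, whereas \Cref{sec:proof2} nominally assumes continuous positive densities on $[0,1]$. This is handled by smoothing: replace the jump in $f_\epsilon$ by a steep but smooth increase over a short sub-interval (easy to arrange while keeping the hazard rate non-decreasing, since we only need to interpolate monotonically from $1/(1-\epsilon)$ up to $1/\epsilon$), and replace $G_\epsilon$ by a continuous density concentrated on a shrinking neighborhood of $0$. Both perturbations move $\mathsf{FB}$ and $\mathsf{SellerP}$ by $o(1)$ as the smoothing parameters tend to zero, so the limiting ratio $e-1$ is preserved, which completes the plan.
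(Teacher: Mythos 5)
Your construction and the paper's are the same in essence: both take $G$ to be (approximately) a point mass at $0$ and $F$ an exponential truncated to $[0,1]$ with its remaining mass compressed into a vanishing interval near $1$, so that the seller's optimal price tends to $1$, $\mathsf{SellerP}\to e^{-1}$, and $\mathsf{FB}\to 1-e^{-1}$. The only material difference is that the paper chooses an explicit quadratic tail piece that keeps $f$ continuous (so $F$ satisfies the stated $\mathcal{MHR}$ definition outright), whereas you use a uniform-density tail, obtaining a jump in $f$, and defer the (routine) smoothing to a closing remark.
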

\begin{proof}
We need only to construct examples of $F$ and $G$ such that the ratio $\mathsf{FB}/\mathsf{SellerP}$ can be infinitely close to $(e-1)$. Let $G$ be a single point mass at 0, and $F$ be the piecewise function:
$$
F(x)=\begin{cases}
1-e^{-x} & \text{ if } x\leq 1-\delta \\ 
1-e^{-1+\delta}(1-x)\left(\frac{1-\delta}{\delta^{2}}x-\frac{1-3\delta+\delta^{2}}{\delta^{2}}\right) & \text{ if } 1-\delta\leq x\leq 1 \\ 
\end{cases},
$$
where $\delta>0$ is very close to 0. The function $F$ is defined to be the CDF of an exponential distribution on $[0,1-\delta]$ and a quadratic function on $[1-\delta,1]$ that smoothly connects the exponential part and the point $F(1)=1$: 
\begin{center}
\begin{tikzpicture}[
    declare function={
        func(\x)= (\x < 0.9)*(1-exp(-\x))   +
                (\x >= 0.9) * ((90*exp(-0.9)*\x-71*exp(-0.9))*(\x-1)+1)
     ;
  }
]
 
\begin{axis}[
    xmin = 0, xmax = 1,
    ymin = 0, ymax = 1,
    minor tick num = 1,
    width = 0.35\textwidth,
    height = 0.35\textwidth,
    legend pos=north west]
\addplot[
    domain = 0:1,
    samples = 400,
    smooth,
    thick,
    blue,
] {func(x)};
\addlegendentry{\(F(x)\)} 
\addplot[
    domain = 0:1,
    samples = 400,
    smooth,
    dashed,
    blue,
] {1-exp(-x)};
\addlegendentry{\(1-e^{-x}\)}
\end{axis}
\end{tikzpicture}
\end{center}
For $x\leq 1-\delta$, the CDF $F$ has a constant hazard rate $f(x)/(1-F(x))=e^{-x}/e^{-x}=1$, while on $[1-\delta,1]$, the function $F'=f$ is monotone increasing and hence the hazard rate of $F$ is monotone increasing on $[1-\delta,1]$. Thus $F$, as is defined above, satisfies the MHR property. For $x\leq 1-\delta$,
$$\varphi(x)=x-\frac{1-F(x)}{f(x)}=x-1<0,$$
and hence we must have $\varphi^{-1}(0)>1-\delta$. This implies that
$$\mathsf{SellerP}=\mathsf{SellerP}(0)=\int_{\varphi^{-1}(0)}^{1}(v-0)\d F(v)\leq \int_{1-\delta}^{1}v\d F(v)\leq \int_{1-\delta}^{1}1\d F(v)=F(1)-F(1-\delta)=e^{-1+\delta},$$
which tends to $e^{-1}$ as $\delta\rightarrow 0$. But we also have according to \cref{eq:1}
$$\mathsf{FB}=\mathsf{FB}(0)=\int_{0}^{1}(1-F(v))\d v,$$
which clearly tends to $\int_{0}^{1}e^{-x}\d x=1-e^{-1}$ when $\delta\rightarrow 0$.
This shows that
\[\inf_{\substack{F\in\mathcal{MHR}\\ G}}\frac{\mathsf{SellerP}}{\mathsf{FB}}\leq\frac{e^{-1}}{1-e^{-1}}=\frac{1}{e-1}.\qedhere\]
\end{proof}
\begin{remark}
Note that the hard case given above is actually when the seller has no cost. In this case, the gains-from-trade is equal to the welfare. Since any lower bound of the gains-from-trade approximation ratio always applies to the welfare approximation ratio as well, we can combine this observation with the lower bound proved in \Cref{subsec:lower-bound} to conclude that, assuming MHR of the buyer's distribution, the welfare approximation ratio of the seller-pricing mechanism to the first-best mechanism is also equal to $(e-1)$. 
\end{remark}
\begin{remark}
If we take the buyer's CDF $F$ to be the one in the preceding proof, and the seller's CDF to be $G(x)=1-F(1-x)$, it's easy to find that $\mathsf{RandOff}/\mathsf{FB}\rightarrow 1-1/e$ when the parameter $\delta\rightarrow 0$. Combining this with the lower bound in \Cref{subsec:lower-bound}, we have
$$0.582\approx\frac{1}{e-1}\leq\inf_{F,G^{\text{R}}\in\mathcal{MHR}}\frac{\mathsf{RandOff}}{\mathsf{FB}}\leq 1-\frac{1}{e}\approx 0.632,$$
where $G^{\text{R}}:=1-G(1-x)$. This shows that $\inf_{F,G^{\text{R}}\in\mathcal{MHR}}(\mathsf{RandOff}/\mathsf{FB})$ is strictly larger than $\inf_{F,G}(\mathsf{RandOff}/\mathsf{FB})$, which lies in $[0,317,0.495]$ (see \Cref{subsec:results}). 
\end{remark}

\section*{Acknowledgement}
The author would like to thank Kangning Wang and Zhaohua Chen for reading an earlier draft, discussion about the content, and their helpful suggestions on the presentation of the paper.

\bibliographystyle{plainnat}
\bibliography{reference}

\begin{thebibliography}{12}
\providecommand{\natexlab}[1]{#1}
\providecommand{\url}[1]{\texttt{#1}}
\expandafter\ifx\csname urlstyle\endcsname\relax
  \providecommand{\doi}[1]{doi: #1}\else
  \providecommand{\doi}{doi: \begingroup \urlstyle{rm}\Url}\fi

\bibitem[Babaioff et~al.(2021)Babaioff, Dobzinski, and Kupfer]{BDK:2021}
Moshe Babaioff, Shahar Dobzinski, and Ron Kupfer.
\newblock A note on the gains from trade of the random-offerer mechanism.
\newblock \emph{arXiv preprint arXiv:2111.07790}, 2021.

\bibitem[Blumrosen and Dobzinski(2016)]{BD:2016}
Liad Blumrosen and Shahar Dobzinski.
\newblock (almost) efficient mechanisms for bilateral trading.
\newblock \emph{arXiv preprint arXiv:1604.04876}, 2016.

\bibitem[Blumrosen and Mizrahi(2016)]{BM:2016}
Liad Blumrosen and Yehonatan Mizrahi.
\newblock Approximating gains-from-trade in bilateral trading.
\newblock In \emph{WINE}, pages 400--413, 2016.

\bibitem[Brustle et~al.(2017)Brustle, Cai, Wu, and Zhao]{BCWZ:2017}
Johannes Brustle, Yang Cai, Fa~Wu, and Mingfei Zhao.
\newblock Approximating gains from trade in two-sided markets via simple
  mechanisms.
\newblock In \emph{EC}, pages 589--590, 2017.

\bibitem[Cai et~al.(2021)Cai, Goldner, Ma, and Zhao]{CGMZ:2021}
Yang Cai, Kira Goldner, Steven Ma, and Mingfei Zhao.
\newblock On multi-dimensional gains from trade maximization.
\newblock In \emph{SODA}, pages 1079--1098, 2021.

\bibitem[Deng et~al.(2022)Deng, Mao, Sivan, and Wang]{DMSW:2022}
Yuan Deng, Jieming Mao, Balasubramanian Sivan, and Kangning Wang.
\newblock Approximately efficient bilateral trade.
\newblock In \emph{STOC}, 2022.

\bibitem[Kang and Vondr{\'a}k(2019)]{KV:2019}
Zi~Yang Kang and Jan Vondr{\'a}k.
\newblock Fixed-price approximations to optimal efficiency in bilateral trade.
\newblock \emph{SSRN}, 2019.
\newblock URL \url{https://ssrn.com/abstract=3460336}.

\bibitem[Kang et~al.(2022)Kang, Pernice, and Vondr{\'{a}}k]{KPV:2021}
Zi~Yang Kang, Francisco Pernice, and Jan Vondr{\'{a}}k.
\newblock Fixed-price approximations in bilateral trade.
\newblock In \emph{SODA}, pages 2964--2985, 2022.

\bibitem[Leininger et~al.(1989)Leininger, Linhart, and Radner]{LLR:1989}
Wolfgang Leininger, Peter~B Linhart, and Roy Radner.
\newblock Equilibria of the sealed-bid mechanism for bargaining with incomplete
  information.
\newblock \emph{Journal of Economic Theory}, 48\penalty0 (1):\penalty0 63--106,
  1989.

\bibitem[McAfee(2008)]{McAfee:2008}
R~Preston McAfee.
\newblock The gains from trade under fixed price mechanisms.
\newblock \emph{Applied Economics Research Bulletin}, 1\penalty0 (1):\penalty0
  1--10, 2008.

\bibitem[Myerson(1981)]{Mye:1981}
Roger~B Myerson.
\newblock Optimal auction design.
\newblock \emph{Mathematics of operations research}, 6\penalty0 (1):\penalty0
  58--73, 1981.

\bibitem[Myerson and Satterthwaite(1983)]{MS:1983}
Roger~B Myerson and Mark~A Satterthwaite.
\newblock Efficient mechanisms for bilateral trading.
\newblock \emph{Journal of economic theory}, 29\penalty0 (2):\penalty0
  265--281, 1983.

\end{thebibliography}

\end{document}